\newcommand{\captionfonts}{\normalsize}
\long\def\@makecaption#1#2{%
  \vskip\abovecaptionskip
  \sbox\@tempboxa{{\captionfonts #1: #2}}%
  \ifdim \wd\@tempboxa >\hsize
    {\captionfonts #1: #2\par}
  \else
    \hbox to\hsize{\hfil\box\@tempboxa\hfil}%
  \fi
  \vskip\belowcaptionskip}
\newtheorem{theorem}{Theorem}[section]
\theoremstyle{definition}
\theoremstyle{remark}
\def\DIC{\textsc{dic}}
\def\WAIC{\textsc{waic}}
\def\PCIC{\textsc{pcic}}
\def\MCMC{\textsc{mcmc}}
\def\loocv{\textsc{loocv}}
\def\IPW{\textsc{ipw}}
\newcommand{\R}{\mathbb{R}}
\newcommand{\Cov}{\mathrm{Cov}}
\newcommand{\Ep}{\mathbb{E}}
\newcommand{\Eppos}{\Ep_{\mathrm{pos}}}
\newcommand{\Covpos}{\mathrm{Cov}_{\mathrm{pos}}}
\newcommand{\Vpos}{\mathbb{V}_{\mathrm{pos}}}
\renewcommand{\Pr}{\mathbb{P}}
\renewcommand{\tilde}{\widetilde}
\renewcommand{\hat}{\widehat}
\begin{document}




\hspace{13.9cm}

\ \vspace{20mm}\\

{\noindent \LARGE Posterior Covariance Information Criterion \protect\\ for Weighted Inference}

\ \\
{\bf \large Yukito Iba  and Keisuke Yano}\\
{The Institute of Statistical Mathematics.}\\
%

{\bf Keywords:} Bayesian statistics;
causal inference;
covariate-shift adaptation; 
\\
Markov chain Monte Carlo;
predictive model selection;
quasi-Bayesian prediction

\newpage

\thispagestyle{empty}
\markboth{}{NC instructions}
\ \vspace{-0mm}\\
%
\begin{center} {\bf Abstract} \end{center}
For predictive evaluation based on quasi-posterior distributions,
we develop a new information criterion, the posterior covariance information criterion ({\PCIC}). {\PCIC} generalises the widely applicable information criterion ({\WAIC}) so as to effectively handle predictive scenarios where likelihoods for the estimation and the evaluation of the model may be different. 
A typical example of such scenarios
is the weighted likelihood inference, including
prediction under covariate shift and counterfactual prediction.
The proposed criterion utilises a posterior covariance form and is computed by using only one Markov chain Monte Carlo run.
Through numerical examples, we demonstrate how {\PCIC} can apply in practice.
Further, we show that {\PCIC} is asymptotically unbiased to the quasi-Bayesian generalization error under mild conditions
in weighted inference  with both regular and singular statistical models.

\section{Introduction}
In statistical research, predictive model selection is a central topic. 
Since Akaike's information criterion \cite{Akaike_1973} initiated this field of study, various information criteria have been suggested to evaluate the quality of out-of-sample prediction (e.g., \cite{Takeuchi_1976, Konishi_Genshiro_1996, Efron_2004}). For this purpose, many methods of cross-validatory assessments (e.g., \cite{Gelfand_Dey_1994,Vehtari_etal_2017}) have also been proposed.    

The evaluation of Bayesian predictive models is a topic of developing interest in this field \cite{BDA3}. Because the Markov chain Monte Carlo ({\MCMC}) has become a popular tool for Bayesian inference, evaluating model performance using posterior samples generated by {\MCMC} is quite convenient. Pioneering proposals in this direction include the deviance information criterion ({\DIC}; \cite{Spiegelhalter_etal_2002}), Bayesian leave-one-out cross validation ({Bayesian-\loocv}; \cite{Gelfand_Dey_1994,Vehtari_etal_2017}), and widely applicable information criterion ({\WAIC}; \cite{Watanabe_2010,Watanabe_book_mtbs,Millar}). These criteria allow the predictive distribution to be evaluated by using samples from a single run of posterior simulation with primary data; no additional simulations with ``leave-one-out” data are required. 

However, these criteria 
presume that training and test data are sampled under the same condition. Such an assumption is not acceptable in several critical predictive scenarios. A typical example is prediction under \textit{covariate shift} \cite{Shimodaira_2000, Sugiyama_Krauledat_Muller_2007, Yamazaki_etal_2007}, where the distributions of covariates in regression are changed during the training and test stages. It is well known that using weighted likelihood optimized for test samples produces better prediction results (see \cite{Shimodaira_2000}, and see also \S\ref{sec:applications} of this paper). Another case of interest is {\it counterfactual prediction}; it is also called {\it causal inference} in traditional terminology. Causal inference and predictive model evaluation have long been developed as almost independent fields. However, there is increasing interest in predictive model evaluation under counterfactual situations (e.g., \cite{Platt_etal_2013,Baba_Kanamori_Ninomiya_2017}). Weighted likelihood adjusting counterfactual circumstances is often employed in the training phase of counterfactual prediction (e.g., \cite{Robins_etal_2000}).

This study aims to provide a computationally efficient method for evaluating the prediction with Bayesian models in these settings. To achieve this, we extend {\WAIC} to handle weighted inference, where likelihoods for the estimation and the evaluation of the model could differ. In that case, the penalty term of {\WAIC} is generalised to a posterior covariance form,
which characterises the consequent criterion given in \S\ref{sec:PCIC} and will be referred to as {\it Posterior Covariance Information Criterion} ({\PCIC}).

The suggested criterion, {\PCIC}, share advantages of {\WAIC} in that it is calculated only by using posterior samples, and the result is numerically stable with respect to influential observations that make the variance of the importance sampling techniques large (c.f.,~\cite{Peruggia_1997}). It can be calculated by a single run of posterior simulation based on the original data. So, the proposed method has significant advantages over the conventional cross-validatory evaluation \cite{Sugiyama_Krauledat_Muller_2007}, where training should be performed repeatedly for each training set. Further, the use of model-by-model analytical computations of information quantities occurring in the bias-correction term \cite{Shimodaira_2000,Baba_Kanamori_Ninomiya_2017} is eliminated, which can enhance domain users' ability to assess their own models when utilising information criteria.

Another important feature of {\WAIC} inherited by {\PCIC} is that it can deal with {\it singular statistical models} in the sense of \cite{Watanabe_book_mtbs}, where the posterior distributions are not well approximated by a normal distribution. 
We show that {\PCIC} is an asymptotically unbiased estimator of the quasi-Bayesian generalization error 
under mild conditions in 
weighted inferences with both regular and singular statistical models; see
Theorem \ref{thm: consistency of PCIC in regular models}.
We provide conditions for the theorem in Appendix \ref{appendix: regularity conditions} and the proof of in Appendix \ref{proofoftheorem}.

\section{Posterior Covariance Information Criterion}\label{sec:PCIC}

In this section, we present {\PCIC} and its basic features in a generic form. In the subsequent section, we explain how to use it in a variety of predictive scenarios.

First, to represent situations in that likelihoods for the estimation and the evaluation of the model are different, we utilise a quasi-Bayesian framework to define the predictive distribution; the working posterior in such a case is called the quasi-posterior distribution \cite{Chernozhukov_Hong_2003}, 
or the generalised posterior distribution \cite{Bissiri_etal_2016}.

Suppose that we have $n$ independent and identically distributed observations $Y=(Y_{1},\ldots,Y_{n})$ from a sample space $\mathcal{Y}$ and weights $\{w_{i}>0: i=1,\ldots,n\}$ for observations.
On the basis of these observations,
we define a quasi-posterior density on 
$d$-dimensional parameter space $\Theta$ in $\R^{d}$ associated with arbitrary observation-wise score functions $s_{i}(\cdot,\cdot):\mathcal{Y}\times \Theta\to\mathbb{R}$ 
and a prior density $\pi(\theta)$ as:
\begin{align*}
    \pi(\theta\,;\, Y)
    =\frac{\exp\{ \sum_{i=1}^{n}s_{i}(Y_{i},\theta)\}\pi(\theta) }
    {\int \exp\{\sum_{i=1}^{n}s_{i}(Y_{i},\theta')\}\pi(\theta')d\theta'}.
\end{align*}
Using this, we define the quasi-Bayesian predictive density as:
\begin{align*}
    h_{i,\pi}(\cdot \mid Y) := \int h_{i}(\cdot \mid \theta) \pi(\theta \,;\,Y) d\theta , \quad i=1,\ldots,n,
\end{align*}
where $h_{i}(\cdot \mid \theta)$ is an observation-wise probability density on $\mathcal{Y}$ parameterised by $\theta\in \Theta$.
Our aim is to estimate the weighted quasi-Bayesian generalisation error
\begin{align*}
    G_{n}:=\Ep_{\tilde{Y}}\left[-
    \frac{1}{n}\sum_{i=1}^{n}w_{i}\log h_{i,\pi}(\tilde{Y}_{i}\mid Y)\right],
\end{align*}
where $\tilde{Y}=(\tilde{Y}_{1},\ldots,\tilde{Y}_{n})$ is an independent copy of $Y$ 
and $\Ep_{\tilde{Y}}$ is the expectation with respect to $\tilde{Y}$.

\textbf{Example.}
Let us illustrate an example of the weighted quasi-Bayesian generalization error.
Assume we have the current i.i.d.~pairs $\mathcal{D}:=\{(Y_{i},X_{i}):i=1,\ldots,n\}$ of response variables and covariates with $X_{i}$ following from $p_{\mathrm{train}}(x)$,
and 
we want to evaluate the generalisation error of a predictive density $f(\cdot\,;\, \tilde{X}_{n+1} ,\mathcal{D})$ 
with $\tilde{X}_{n+1}$ following from $p_{\mathrm{test}}(x)$.
Then, by the importance sampling formula, 
we obtain the weighted generalisation error as follows:
\begin{align*}
&\Ep_{\tilde{X}_{n+1}\sim p_{\mathrm{test}}}
\Ep_{\tilde{Y}_{n+1}\mid \tilde{X}_{n+1}}
[
-\log f(\tilde{Y}_{n+1}\,;\,\tilde{X}_{n+1},\mathcal{D})
]\\
&=
\frac{1}{n}\sum_{i=1}^{n} 
\Ep_{X_{i}\sim p_{\mathrm{train}}} \frac{p_{\mathrm{test}}(X_{i})}{p_{\mathrm{train}}(X_{i})}
\Ep_{\tilde{Y}_{n+1}\mid X_{i}}
[
-\log f(\tilde{Y}_{n+1}\,;\,X_{i},\mathcal{D})
]\\
&=\Ep_{\tilde{Y}}\left[-
    \frac{1}{n}\sum_{i=1}^{n}w_{i}
    \log f(\tilde{Y}_{i}\,;\,X_{i},\mathcal{D})\right],
\end{align*}
where $w_{i}:=p_{\mathrm{test}}(X_{i})/p_{\mathrm{train}}(X_{i})$.
This situation is known as prediction under covariate shift and will be discussed in \S\ref{sec:applications},
where several choices of predictive density $f$ different from Bayesian predictive density are also discussed.

To estimate the weighted quasi-Bayesian generalization error, 
we present the \textit{posterior covariance information criterion} ({\PCIC}) defined by
\begin{align}
    \PCIC &= -\sum_{i=1}^{n}
    \frac{w_i}{n}\log \Eppos[h_{i}(Y_{i}\,\mid\, \theta)] \,+\,
    \sum_{i=1}^{n} 
    \frac{w_i}{n}\,\Covpos 
    \left[
    \log h_{i}(Y_{i}\,\mid\,\theta)\,,\, s_{i}(Y_{i}\,,\,\theta)
    \right],
\label{eq_pcic}
\end{align}
where let operations $\Eppos$ and $\Covpos$ denote
expectation and covariance with respect to the quasi-posterior distribution, respectively.
An important feature of {\PCIC} is that the training phase does not necessarily employ the full likelihood as the validation phase, which enables us to treat a wide range of predictive scenarios, as shown in \S\ref{sec:applications}.

The following theorem shows that
{\PCIC} is an asymptotically unbiased estimator of the quasi-Bayesian generalisation error $G_{n}$.
Our theorem admits $\log h_{i}$ and $s_{i}$ to be singular in the sense of \cite{Watanabe_book_mtbs}; i.e., information matrices $\Ep_{Y}[\nabla_{\theta}\log h_{i}(Y_{i}\mid\theta)\nabla^{\top}_{\theta}\log h_{i}(Y_{i}\mid\theta)]$,
$\Ep_{Y}[\nabla_{\theta}s_{i}(Y_{i}\mid\theta)\nabla^{\top}_{\theta}s_{i}(Y_{i}\mid\theta)]$ are singular; for details, see conditions in Appendix \ref{appendix: regularity conditions}.
The proof is given in Appendix \ref{proofoftheorem}.

\begin{theorem}\label{thm: consistency of PCIC in regular models}
Under conditions in Appendix \ref{appendix: regularity conditions},
we have
\begin{align*}
\Ep_{Y}[G_{n}]-\Ep_{Y}[\PCIC]=o(n^{-1}),
\end{align*}
where $\Ep_{Y}$ is the expectation with respect to $Y$.
\end{theorem}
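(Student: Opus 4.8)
The plan is to prove the statement in two stages that mirror the \loocv connection announced in the introduction: first show that {\PCIC} agrees with a leave-one-out cross-validation (\loocv) score up to $O(n^{-3/2})$, and then show that this \loocv score is asymptotically unbiased for $G_n$. The bridge is the elementary identity relating the full quasi-posterior to the one obtained by deleting the $i$-th observation. Writing $\Eppos^{-i}$ and $\Covpos^{-i}$ for the moments under the quasi-posterior built from $\{s_j(X_j,\cdot):j\neq i\}$, deleting an observation is the same as reweighting the full posterior by $\exp\{-s_i\}$:
\begin{align*}
\Eppos^{-i}[f(\theta)]=\frac{\Eppos[f(\theta)\exp\{-s_i(X_i,\theta)\}]}{\Eppos[\exp\{-s_i(X_i,\theta)\}]}.
\end{align*}
I would pair this with the quasi-Bayesian concentration facts that Appendix~\ref{appendix: regularity conditions} is designed to supply: the quasi-posterior contracts at rate $n^{-1/2}$ to a point $\theta_0$ maximising the expected total score, a Bernstein--von Mises expansion holds with posterior covariance $\tfrac1n\mathcal I^{-1}$ (where $\mathcal I$ is the curvature of the expected total score at $\theta_0$), and consequently $s_i(X_i,\theta)-\Eppos[s_i(X_i,\theta)]$ and $\log h_i(X_i\mid\theta)-\Eppos[\log h_i(X_i\mid\theta)]$ are $O_p(n^{-1/2})$ under the posterior, with posterior variances and covariances of order $n^{-1}$.

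For the first stage I would feed the identity into a cumulant expansion. It yields $\Eppos[\log h_i(X_i\mid\theta)]=\Eppos^{-i}[\log h_i(X_i\mid\theta)]+\Covpos^{-i}[\log h_i(X_i\mid\theta),s_i(X_i,\theta)]+O(n^{-3/2})$, together with $\log\Eppos[h_i(X_i\mid\theta)]=\Eppos[\log h_i(X_i\mid\theta)]+\tfrac12\Vpos[\log h_i(X_i\mid\theta)]+O(n^{-3/2})$ and the matching expansion of $\log\Eppos^{-i}[h_i(X_i\mid\theta)]$. Substituting into \eqref{eq_pcic}, the posterior covariance term of {\PCIC} cancels exactly the $\Covpos^{-i}$ term generated by the reweighting, leaving
\begin{align*}
\PCIC=-\sum_{i=1}^n\frac{w_i}{n}\log\Eppos^{-i}[h_i(X_i\mid\theta)]+O(n^{-3/2}),
\end{align*}
whose right-hand side is precisely the \loocv score. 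Since this pools $n$ remainders of size $O(n^{-3/2})$ against the weights $w_i/n$, the total error stays $O(n^{-3/2})=o(n^{-1})$. This is the step that makes the covariance form of {\PCIC} natural and that reduces to the functional-variance correction of {\WAIC} when $s_i=\log h_i$.

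The second stage compares $\EpX[\loocv]$ with $\EpX[G_n]$. Because $X_i$ is independent of the deleted posterior, each \loocv term evaluates a predictor trained on $n-1$ points at a genuinely fresh point, so after applying $\EpX$ the leading posterior-mean and functional-variance contributions coincide with those of $G_n$ having $\tilde X_i$ in place of $X_i$; the mismatch between an in-sample and an out-of-sample functional variance is only $O(n^{-2})$. What survives at order $n^{-1}$ is a cross term which, after expanding $\log h_i$ and $s_i$ around $\theta_0$ and inserting the Bernstein--von Mises covariance $\tfrac1n\mathcal I^{-1}$, takes the form $\tfrac1{n^2}\sum_i w_i\,\nabla\ell_i(\theta_0)^\T\mathcal I^{-1}\nabla\bar s_i(\theta_0)$, with $\ell_i(\theta)=\EpXtilde[\log h_i(\tilde X_i\mid\theta)]$ and $\bar s_i(\theta)=\EpX[s_i(X_i,\theta)]$. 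The regularity conditions drive this to $o(n^{-1})$: in the settings of \S\ref{sec:applications} the expected observation-wise scores share the maximiser $\theta_0$, so $\nabla\bar s_i(\theta_0)=0$ for each $i$ (equivalently, correct specification gives $\nabla\ell_i(\theta_0)=0$), and the residual vanishes to the required order.

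The main obstacle is this second stage rather than the algebra of the first. Two issues need care. Analytically, every cumulant expansion must be controlled uniformly enough that the neglected remainders survive both $\EpX$ and the summation over $i$; this is exactly where the moment and contraction hypotheses of Appendix~\ref{appendix: regularity conditions} do the real work, by bounding the posterior tails and licensing the interchange of $\EpX$ with the posterior operations. Conceptually, the vanishing of the order-$n^{-1}$ cross term is the precise point at which the extra generality of allowing $s_i\neq\log h_i$ is paid for: unlike {\WAIC}, where $s_i=\log h_i$ makes $\nabla\bar s_i(\theta_0)=0$ automatic, here one must invoke the stationarity of the expected scores at the common $\theta_0$. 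With these in hand, chaining the two stages gives $\EpX[\PCIC]=\EpX[\loocv]+o(n^{-1})=\EpX[G_n]+o(n^{-1})$, which is the assertion.
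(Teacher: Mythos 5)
Your two-stage plan is exactly the paper's route: stage one is Theorem \ref{thm: equivalence between PCIC and loocv} (via the importance-sampling identity \eqref{eq: loocvexpression} and the cumulant expansion of Lemma \ref{lem:cumulant expansion for loocv}), and stage two is Theorem \ref{thm: validity of IS-CV_wq}, after which the two are chained just as in the paper's one-line proof of Theorem \ref{thm: consistency of PCIC in regular models}. Your first stage is algebraically organised a little differently --- you reweight the leave-one-out posterior by $e^{-s_i}$ and cancel the resulting $\mathrm{Cov}$ term against the correction term of {\PCIC}, whereas the paper cumulant-expands $\log\Eppos[e^{\log h_i-s_i}]$ and $\log\Eppos[e^{-s_i}]$ directly --- but the content is identical. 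The real divergence is in stage two: the paper computes the full $O(n^{-1})$ expansion \eqref{eq: GE} of $\EpX[G_n]$ (and implicitly the matching expansion of {\ISCV}), while you bound the difference directly and isolate the one surviving cross term $n^{-2}\sum_i w_i\,\nabla\ell_i(\theta_S)^{\T}J_S^{-1}\nabla \bar s_i(\theta_S)$. That term is genuinely present and your diagnosis of when it dies is sharper than the paper's treatment; note, however, that its vanishing needs the \emph{individual} stationarity $\nabla\bar s_i(\theta_S)=0$ (or $\nabla\ell_i(\theta_S)=0$), which is not literally among conditions R1--R8 --- the definition of $\theta_S$ only gives $\sum_i\nabla\bar s_i(\theta_S)=0$. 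It does hold in all of the paper's applications, where the $s_i$ are a common function of the i.i.d.\ observation, so your appeal to the settings of \S\ref{sec:applications} is legitimate, but strictly speaking you (like the paper) are using a hypothesis beyond the stated regularity conditions, and it would be worth recording it explicitly.
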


Of course, the conditions for the theorem do not allow $h_{i}$ and $s_{i}$ to be arbitrary. But,  
these are satisfied in the weighted  inferences with both regular and singular statistical models discussed in the subsequent section.

\textit{Remark.}
{\PCIC} is a natural generalisation of {\WAIC}
\begin{align*}
    \WAIC := -\frac{1}{n}\sum_{i=1}^{n}\log \Eppos[h(Y_{i}\mid\theta)]
    +\frac{1}{n}\sum_{i=1}^{n}\Vpos[\log h(Y_{i}\mid\theta)],
\end{align*}
where 
$\log h(\cdot\mid\cdot)$ is a log likelihood,
and
$\Vpos$ is the variance with respect to the posterior distribution.
This is verified by setting all score functions to the log likelihood $s_{1}(\cdot\,,\,\cdot)=s_{2}(\cdot\,,\,\cdot)=\cdots=s_{n}(\cdot\,,\,\cdot)=\log h(\cdot\,\mid\,\cdot)$ and setting $w_{1}=\cdots=w_{n}=1$.

\textit{Remark.}
The proof follows the standard machinery for the singular statistical models developed by \cite{Watanabe_2010,Watanabe_2010_b,Watanabe_book,Watanabe_book_mtbs}.
Yet, 
an important and non-trivial difference is to employ 
the Stein identity in the presence of correlation \cite{Stein_1981,Cochrane_2001}:
Suppose that $(N_{1},N_{2})$ follows a bivariate Gaussian distribution and $F$ is a differentiable function with $\Ep[|F'(N_{1})|]$. Then, we have
\begin{align}
\Cov[F(N_{1}),N_{2}]=\Cov[N_{1},N_{2}]\Ep[F'(N_{1})].
\label{Stein identity}
\end{align}
This identity partially explains why the posterior covariance form appears in {\PCIC};
for details, see Appendix \ref{proofoftheorem}.

\section{Applications}
\label{sec:applications}

This section explain two uses of {\PCIC}.

\subsection{Covariate shift adaptation}

First, we will look at a {\PCIC} application in the prediction under covariate shift \cite{Shimodaira_2000}.
The prediction under covariate shift has received a lot of attention in many fields including bioinformatics, spam filtering, brain-computer interfacing, and econometrics.

Let $\{(Y_{i},X_{i}):i=1,\ldots,n\}$ be i.i.d.~pairs of a response variable and a covariate, and we assume a parametric model $\{h(Y_{n+1}\mid X_{n+1}\,,\,\theta): \theta\in\Theta\}$ for a response variable $Y_{n+1}$ given a covariate $X_{n+1}$. We introduce a pair of distributions $p_{\mathrm{train}}$ and $p_{\mathrm{test}}$ that express the distributions of the covariate $X_{i}$s in the training and test phases, respectively. For simplicity's sake, we assume that the ratio $r(x):=p_{\mathrm{test}}(x)/p_{\mathrm{train}}(x)$ is known. 
Then, the quasi-posterior distribution associated with $r(X_{i})$s is given by
\begin{align*}
    \pi_{\lambda}(\theta \,;\,\{Y_{i},X_{i}\})\propto \exp
    \left\{ \sum_{i=1}^{n}
    r^{\lambda}(X_{i}) \,\log h(Y_{i}\mid X_{i}\,,\,\theta)\right\}\pi(\theta),
\end{align*}
where $\pi(\theta)$ denotes the prior density. Here, according to the previous studies \cite{Shimodaira_2000,Sugiyama_Krauledat_Muller_2007}, we employed the tilting parameter $\lambda\ge 0$ to control the trade-off between consistency and stability; if $\lambda=1$, 
the working quasi-likelihood in the training phase is unbiased to the full likelihood in the test phase, but its variance can be large (e.g., \cite{Shimodaira_2000}).
Meanwhile, setting $\lambda=0$ corresponds to the Bayesian inference using the unweighted likelihood, which usually leads to inconsistent but stable results.
{\PCIC} in this setup is derived from the generic expression \eqref{eq_pcic}  
by setting $w(X_{i})=r(X_{i})$, $s_{i}(Y_{i},X_{i},\theta)=r^{\lambda}(X_{i}) \,\log h(Y_{i}\mid X_{i}\,,\,\theta)$, which is given as 
\begin{align*}
    \PCIC = 
    -\sum_{i=1}^{n}
    \frac{r(X_{i})}{n}
    \log \Eppos[ h(Y_{i}\mid X_{i}\,,\,\theta)] \,+\,
    \sum_{i=1}^{n}
    \frac{r^{1+\lambda}(X_{i})}{n}
    \,\Vpos  \left[
    \log h(Y_{i}\mid X_{i}\,,\,\theta)\right].
\end{align*}
We can select the value of the tilting parameter $\lambda$ by minimising {\PCIC}.

We illustrate the proposed method by selecting the tilting parameter $\lambda$ in a regression problem described in \cite{Sugiyama_Krauledat_Muller_2007}. 
We set the true distribution of observations as follows:
\begin{align*}
    Y_{i}=\mathrm{sinc}(X_{i}) + \varepsilon_{i}, \quad i=1,\ldots,n,n+1,\ldots,n+m,
\end{align*}
where $\varepsilon_{i}$s follows $N(0,\sigma^{2})$ with $\sigma^{2}=(0.25)^2$. The first $n=50$ covariates $\{X_{i}:i=1,\ldots,n\}$ from $N(0,1)$ and the latter $m=50$ covariates $\{X_{i+n}:i=1,\ldots,m\}$ from $N(0.5,(0.3)^{2})$ are used as the training and test data, respectively.

On the basis of the above setting of the true distribution of observations,
we conduct two numerical experiments with two different models.
Let $\phi(x;\mu,\tau^{2})$ be the normal density with mean $\mu$ and variance $\tau^{2}$.

\textbf{Experiment 1.}
The first experiment discusses
the values of {\PCIC} and {\WAIC} using the regular linear regression model 
\[\{\phi(Y;f_{\theta}(X),\sigma^{2}) \,\text{with}\, f_{\theta}(X)=\theta_{1}+\theta_{2}X :\theta_{1},\theta_{2}\in\mathbb{R}\}.\]
Here we use the prior density give by $\pi(\theta)=\phi(\theta_{1};0,1)\phi(\theta_{2};0,1)$.

\begin{figure}[h]
    \centering
    \includegraphics[width=130mm]{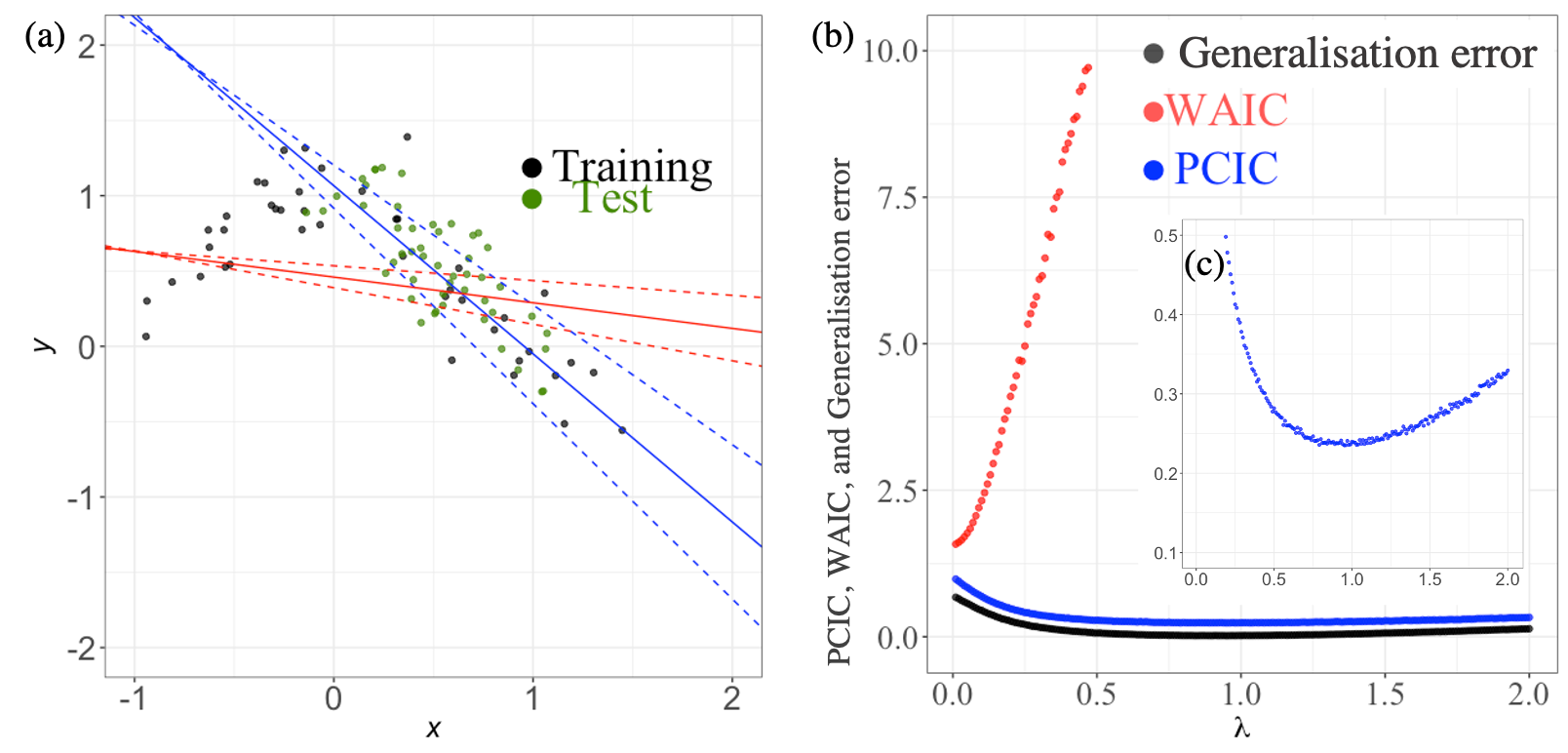}
    \caption{
    An example of the use of {\PCIC} in covariate shift adaptation with regular statistical models: 
    (a) The posterior mean of $f_{\theta}(x)$ and the $95\%$ credible interval for {\PCIC}-best $\lambda$ are shown by a blue solid line and two dashed lines, respectively, while those for {\WAIC}-best $\lambda$ are shown by red solid and dashed lines, respectively. The training and test data are shown as black and green circles, respectively.
    (b) {\PCIC}, {\WAIC}, and the generalisation error for $\lambda\in\{0.01 i\,:\,i=1,\ldots,200\}$. (c) Enlarged plot of {\PCIC} curve.
    \label{fig:covariateshift_A}
    }
\end{figure}

Figure \ref{fig:covariateshift_A} shows the qualitative comparison of {\PCIC} and {\WAIC} with regular statistical models; here {\WAIC} uses the unweighted likelihood, and both are defined using quasi-posterior samples,
where we obtained the quasi-posterior distributions exactly by using the conjugacy.
Figure \ref{fig:covariateshift_A} (a) displays means and their uncertainties of the {\PCIC} and {\WAIC}-best quasi-posterior distribution with the sets of training and test data.
Figure \ref{fig:covariateshift_A} (b) depicts the values of {\PCIC}s, {\WAIC}s, and the quasi-Bayesian generalisation error
$(1/m)\sum_{i=n+1}^{n+m}\left\{-\log h_{i,\pi_{\lambda}}(Y_{i} \mid X_{i})\right\}$.
Figure \ref{fig:covariateshift_A} (c) presents an enlarged plot of {\PCIC} curve which emphasizes the bias-variance trade-off in choosing $\lambda$.
{\PCIC} is close to the quasi-Bayesian generalisation error, whereas {\WAIC}
 cannot accommodate the change in $\lambda$ (Figure \ref{fig:covariateshift_A} (b)).
These features are reflected in the resultant predictions shown in Figure \ref{fig:covariateshift_A} (a).

\textbf{Experiment 2.}
The second experiment discusses
the selection of $\lambda$ by {\PCIC} and {\WAIC} using the singular regression model
\[\{\phi(Y;f_{\theta}(X),\sigma^{2}) \,\text{with}\, f_{\theta}(X)=\theta_{1}\,\text{reLU}(\theta_{2}+\theta_{3}X)+\theta_{4}:\theta_{i}\in\mathbb{R},i=1,2,3,4\},\]
where $\text{reLU}(x):=\max\{0,x\}$.
We use the prior density given by
$\pi(\theta)=\prod_{i=1}^{4}\phi(\theta_{i};0,1)$.

\begin{figure}[h]
    \centering
    \includegraphics[width=130mm]{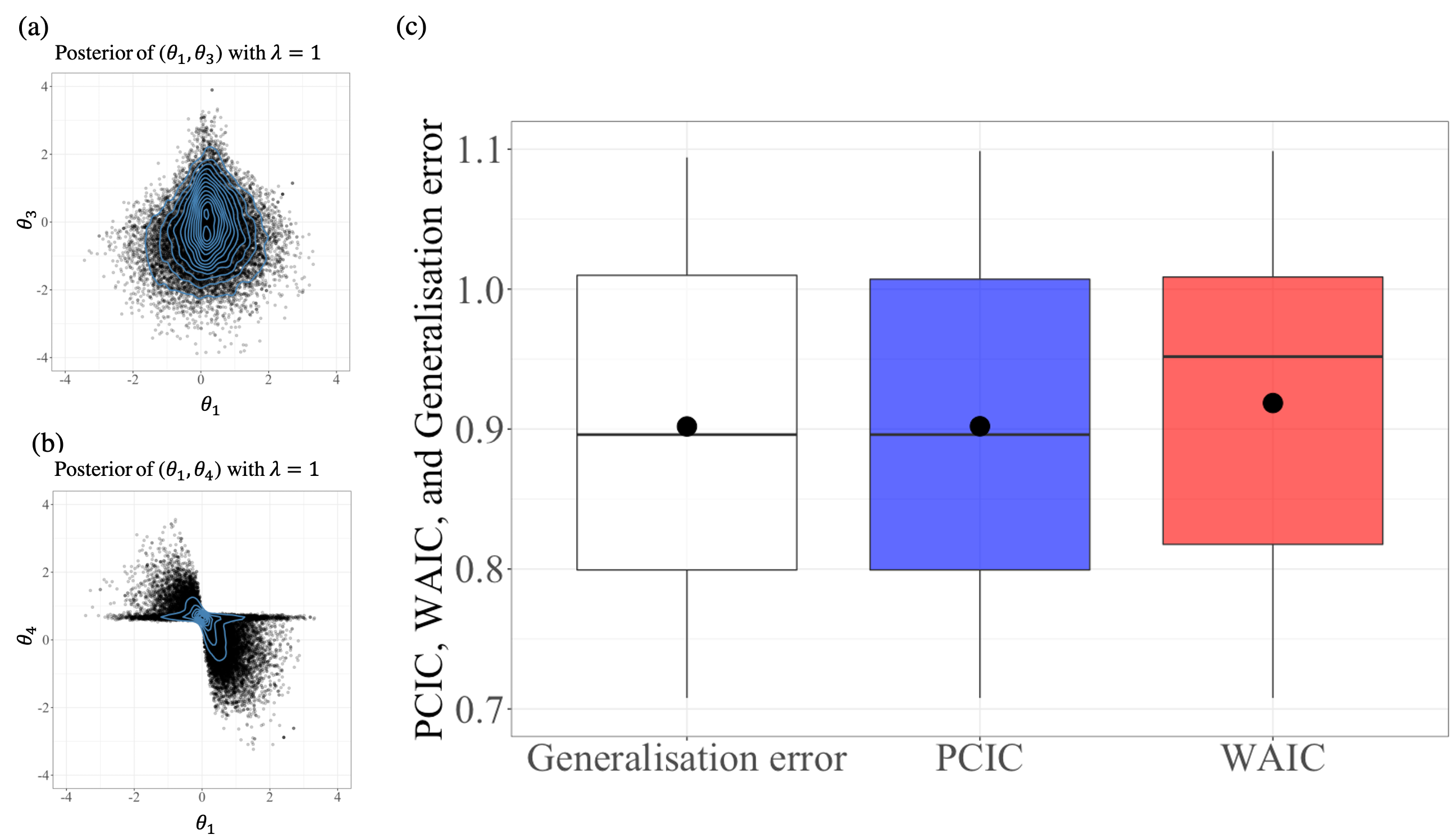}
    \caption{
    An example of the use of {\PCIC} in covariate shift adaptation with singular statistical models: 
    (a) The quasi-posterior of $(\theta_1,\theta_3)$ with $\lambda=1$, where the blue curves represent the contours. 
    (b) The quasi-posterior of $(\theta_1,\theta_4)$ with $\lambda=1$, where the blue curves represent the contours. 
    (c) Boxplots (with means represented by black dots) of the generalization errors
    with the selected $\lambda\in\{0,0.5,1.0\}$ based on {\PCIC}, {\WAIC}, and the expected generalisation error.
    \label{fig:covariateshift_B}
    }
\end{figure}

Figure \ref{fig:covariateshift_B} shows the quantitative comparison of {\PCIC} and {\WAIC} with singular statistical models.
Figures \ref{fig:covariateshift_B} (a) and (b) display the quasi-posterior distributions obtained by the Metropolis--Hastings algorithm,
where the initial value of each parameter is zero, 
the proposal distribution for each parameter is the Gaussian with mean zero and variance $(0.25)^2$,
the number of iterations is $1000000$,
the number of the burnin is $10000$,
and 
the number of the thinning is $50$.
These figures exhibit the non-normality of the quasi-posterior distribution in this set-up.
Figure \ref{fig:covariateshift_B} (c) depicts the boxplots of
the quasi-Bayesian generalisation errors
$(1/m)\sum_{i=n+1}^{n+m}\left\{-\log h_{i,\pi_{\hat{\lambda}}}(Y_{i} \mid X_{i})\right\}$
with $\hat{\lambda}$ selected among $\{0.0,0.5,1.0\}$ on the basis of 
{\PCIC}s, {\WAIC}s, and the expected quasi-Bayesian generalisation error. This figure shows that the means (as well as the medians) of the minimum quasi-Bayesian generalisation errors and the quasi-Bayesian generalisation errors with $\hat{\lambda}$ selected by {\PCIC}
are very close compared to those with $\hat{\lambda}$ selected by {\WAIC}, which suggests 
that, for prediction under the covariance shift,
 {\PCIC} works better than {\WAIC} even in singular statistical models.

\subsection{Causal inference}

Next, we discuss model evaluation in counterfactual situations. Several criteria for selecting the marginal structural model \cite{Robins_etal_2000} have been developed, for example, cross-validatory assessment \cite{Brookhart_varderLaan_2006}, quasi-likelihood information criterion \cite{Platt_etal_2013}, and a $C_{p}$ criterion \cite{Baba_Kanamori_Ninomiya_2017}.

Here, we develop a quasi-Bayesian version of the $C_{p}$ criterion in \cite{Baba_Kanamori_Ninomiya_2017} by {\PCIC}, and compare {\PCIC} to the $C_{p}$ criterion called $\mathrm{w}C_{p}$ proposed by \cite{Baba_Kanamori_Ninomiya_2017}.
We focus on an inverse probability weighted ({\IPW}) estimation using the propensity score \cite{Rosenbaum_Rubin_1983}, and utilise a Bayesian framework.

We use the following notation: $H$ and $n$ denote the number of treatments and sample size, respectively. The observed data consist of a set of quartets  $\{(Y_{i},X_{i},T_{i},Z_{i})\,:\,i=1,\ldots,n\}$, where
$i$ denotes an individual, $Y_{i}$ represents an observed outcome, $X_{i}=(X_{i}^{(1)},\ldots,X_{i}^{(H)})$ are the set of baseline covariates, $T_{i}=(T_{i}^{(1)},\ldots,T_{i}^{(H)})\in\{0,1\}^{H}$ are treatment assignments ($T_{i}^{(h)}=1$ means treatment $h$ is applied and $0$ otherwise), and $Z_{i}$ 
is a confounder vector.

To depict a counterfactual scenario, we employ a potential outcome $Y^{(h)}=\{Y_{i}^{(h)}:i=1,\ldots,n\}$ with the treatment $h$ being applied; an observed outcome $Y_{i}$ equals to $Y_{i}^{(h)}$ when the treatment $h$ is applied to the individual $i$. Then, $h(\cdot\mid X,Z,\theta)$ is assumed to be a conditional parametric density of a potential outcome given a covariate and a confounder. Assuming that the propensity score $e_{i}^{(h)}:=\Pr(T_{i}^{(h)}=1\mid Z_{i})$ is known, the quasi-posterior distribution is defined as
\begin{align*}
    &\pi(\theta \,;\,\{(Y_{i},X_{i},T_{i},Z_{i}):i=1,\ldots,n\})\\
    &\propto \exp
    \left\{  \sum_{i=1}^{n}\sum_{h=1}^{H}\frac{T_{i}^{(h)}}{e_{i}^{(h)}}\log h(Y_{i}^{(h)}\mid X_{i}^{(h)}\,,\,Z_{i}\,,\,\theta) \right\}\pi(\theta),
\end{align*}
where $\pi(\theta)$ is a prior density.

Here we adapt the prediction in the original population
and consider 
a weighted loss 
\begin{align*}
\textsc{wloss}&:=
\Ep_{\tilde{Y}}\left[
- \sum_{i=1}^{n}\sum_{h=1}^{H}\frac{T_{i}^{(h)}}{e_{i}^{(h)}}\log h_{\pi}(\tilde{Y}_{i}^{(h)}\mid Y_{1},\ldots,Y_{n}\,,\,X_{i}^{(h)}\,,\,Z_{i})
\right]
\end{align*}
with $\tilde{Y}_{i}$ being an independent copy of $Y_{i}$. Expectation of \textsc{wloss} 
corresponds to a Bayesian predictive version of \textsc{wrisk} in \cite{Baba_Kanamori_Ninomiya_2017}.
\begin{figure}[h]
    \centering
    \includegraphics[width=130mm]{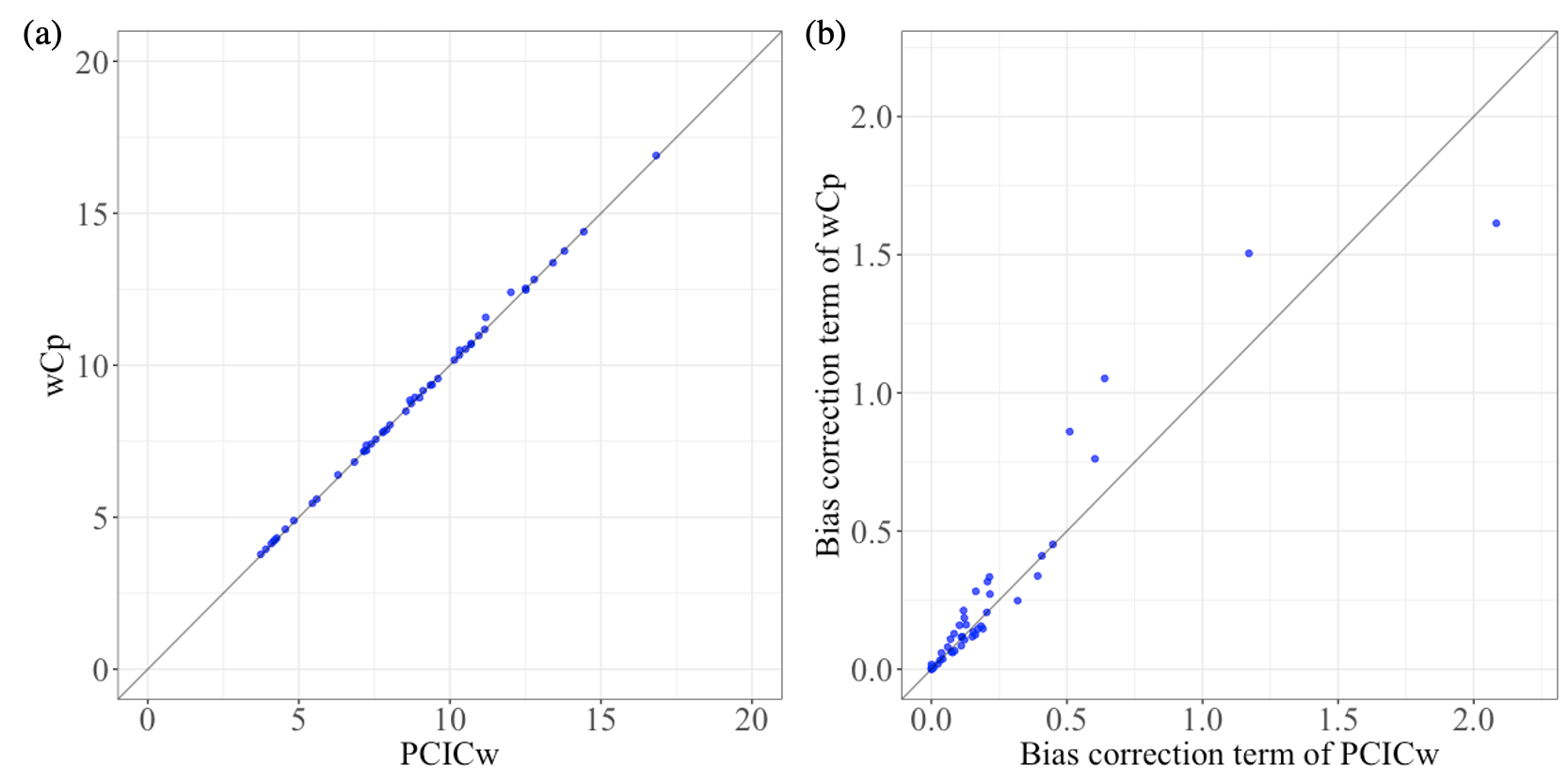}
    \caption{Application of {\PCIC} to the causal inference: (a) the correspondence between ${\PCIC}_{\mathrm{w}}$ and $\mathrm{w}C_{p}$. (b) Comparison of the bias correction terms.}
    \label{fig:semiparametriccausal}
\end{figure}
Now, we can apply the generic theory in \S\ref{sec:PCIC}, which offers an information criterion using the weighted loss for quasi-Bayesian prediction: 
\begin{align*}
    {\PCIC}_{\mathrm{w}} =& 
    -\frac{1}{n}\sum_{i=1}^{n}\sum_{h=1}^{H}\frac{T_{i}^{(h)}}{e_{i}^{(h)}}
    \log \Eppos[ h(Y^{(h)}_{i}\mid X^{(h)},Z_{i},\theta)] 
    \\
    &\,+\,
    \frac{1}{n}\sum_{i=1}^{n}
    \sum_{h=1}^{H}
    \left(\frac{T_{i}^{(h)}}{e_{i}^{(h)}}\right)^{2}\,\Vpos  \left[
    \log h(Y^{(h)}_{i}\mid X^{(h)}_{i}\,,\,Z_{i}\,,\,\theta)\right].
\end{align*}

We compare ${\PCIC}_{\mathrm{w}}$ to $\mathrm{w}C_{p}$ of the {\IPW} estimators given in \cite{Baba_Kanamori_Ninomiya_2017}, where $\mathrm{w}C_{p}$  is asymptotically unbiased to \textsc{wrisk} of {\IPW} estimators for linear regression models with variance known.
Figure \ref{fig:semiparametriccausal} compares ${\PCIC}_{\mathrm{w}}$ and  $\mathrm{w}C_p$
when we employ the linear regression setup described in \cite{Baba_Kanamori_Ninomiya_2017}. We create samples as
\begin{align*}
    Y^{(h)}_{i} = 1 + x^{(h)} + 0.5 (x^{(h)})^{2} + z + \epsilon, \, i=1,\ldots,n=50,\,h=1,\ldots,H=6,
\end{align*}
where $z$ and $\epsilon$ are independently distributed according to
$\mathrm{Un}(-3^{1/2},3^{1/2})$ and $N(0,1)$, respectively.
Here we set the density of $N(\theta_{1}+\theta_{2}x^{(h)}+\theta_{3}(x^{(h)})^{2},2)$ to $h(Y\mid x^{(h)},Z,\theta)$, and
set $N(0,1000\cdot I_{3\times 3})$ to the prior distribution, where $I_{3\times 3}$ is the $3\times3$ identity matrix. 
The values of ${\PCIC}_{\mathrm{w}}$ and $\mathrm{w}C_{p}$ (and their bias-correction terms) are very similar as shown in Figure \ref{fig:semiparametriccausal}.

\section{Conclusion}
\label{sec:conclusion}

We have proposed {\PCIC}, a new information criterion that generalises {\WAIC} to the weighted inference such as covariate shift adaptation and causal inference. 
{\PCIC} shares the favorable features of {\WAIC}: it can be computed with a single run of {\MCMC} and is asymptotically unbiased to the Bayesian generalisation error even in singular statistical models.
{\PCIC} can also be used for predictions with surrogate score functions.

The proposed criteria show that information necessary for predictive evaluation of Bayesian methods is represented as a posterior covariance. Our research demonstrates that this type of representation can be ubiquitously used in a variety of predictive scenarios.

\section*{Acknowledgement}

The authors would like to thank anonymous referees for their comments.
The authors would like to thank Yoshiyuki Ninomiya and Yusaku Ohkubo for fruitful discussions. The authors would also like to thank Shintaro Hashimoto and Tetsuya Takabatake for their helpful comments on the early version of the manuscript. We used an \textsf{R} package bayesQR \cite{Benoit_etal_2017}.
This work was partly supported by JSPS KAKENHI 19K20222, 21K12067, JST CREST JPMJCR1763, and MEXT JPJ010217.

\bibliographystyle{plain}
\bibliography{main}

\begin{thebibliography}{10}

\bibitem{Akaike_1973}
H.~Akaike.
\newblock Information theory and an extension of the maximum likelihood
  principle.
\newblock In B.~Petrov and F.~Cs\'{a}ki, editors, {\em Proceedings of the 2nd
  Intertnational Symposium on Information Theory}, pages 267--281, 1973.

\bibitem{Baba_Kanamori_Ninomiya_2017}
T.~Baba, T.~Kanamori, and Y.~Ninomiya.
\newblock A {$C_p$} criterion for semiparametric causal inference.
\newblock {\em Biometrika}, 104:845--861, 2017.

\bibitem{Benoit_etal_2017}
D.~Benoit, R.~Al-Hamzawi, K.~Yu, and D.~Van den Poel.
\newblock bayes{QR}: {B}ayesian {Q}uantile {R}egression, 2017.
\newblock R package https://CRAN.R-project.org/package=bayesQR.

\bibitem{Bissiri_etal_2016}
P.~Bissiri, C.~Holmes, and S.~Walker.
\newblock A general framework for updating brief distributions.
\newblock {\em Journal of the Royal Statistical Society: Series B (Statistical
  Methodology)}, 78:1103--1130, 2016.

\bibitem{Brookhart_varderLaan_2006}
M.~Brookhart and M.~Van der Laan.
\newblock A semiparametric model selection criterion with applications to the
  marginal structural model.
\newblock {\em Computational Statistics $\&$ Data Analysis}, 50:475--498, 2006.

\bibitem{Chernozhukov_Hong_2003}
V.~Chernozhukov and H.~Hong.
\newblock An {MCMC} approach to classical estimation.
\newblock {\em Journal of Econometrics}, 115:293--346, 2003.

\bibitem{Cochrane_2001}
J.~Cochrane.
\newblock {\em Asset Pricing}.
\newblock Princeton University Press, 2001.

\bibitem{Efron_2004}
B.~Efron.
\newblock The estimation of prediction error: Covariance penalties and
  cross-validation.
\newblock {\em Journal of the American Statistical Association}, 99:619--632,
  2004.

\bibitem{Gelfand_Dey_1994}
A.~Gelfand and D.~Dey.
\newblock Bayesian model choice: Asymptotics and exact calculations.
\newblock {\em Journal of the Royal Statistical Society: Series B},
  56:501--514, 1994.

\bibitem{BDA3}
A.~Gelman, J.~Carlin, H.~Stern, D.~Dunson, A.~Vehtari, and Donald Rubin.
\newblock {\em Bayesian Data Analysis, 3rd Edition}.
\newblock Chapman \& Hall/CRC, 2013.

\bibitem{Konishi_Genshiro_1996}
S.~Konishi and G.~Kitagawa.
\newblock Generalised information criteria in model selection.
\newblock {\em Biometrika}, 83:875--890, 1996.

\bibitem{Millar}
R.~Millar.
\newblock Conditional vs marginal estimation of the predictive loss of
  hierarchical models using {WAIC} and cross-validation.
\newblock {\em Statistics and Computing}, 28:375--385, 2018.

\bibitem{Peruggia_1997}
M.~Peruggia.
\newblock On the variability of case-deletion importance sampling weights in
  the {B}ayesian linear model.
\newblock {\em Journal of the Ametican Statistical Association}, 92:199--207,
  1997.

\bibitem{Platt_etal_2013}
R.~Platt, M.~Brookhart, S.~Cole, D.~Westreich, and E.~Schisterman.
\newblock An information criterion for marginal structural models.
\newblock {\em Statistics in Medicine}, 32:1383--1393, 2013.

\bibitem{Robins_etal_2000}
J.~M. Robins, M.~A. Hernan, and B.~Brumback.
\newblock Marginal structural models and causal inference in epidemiology.
\newblock {\em Epidemiology}, 11:550–560, 2000.

\bibitem{Rosenbaum_Rubin_1983}
P.~Rosenbaum and D.~Rubin.
\newblock The central role of the propensity score in observational studies for
  causal effects.
\newblock {\em Biometrika}, 70:41--55, 1983.

\bibitem{Shimodaira_2000}
H.~Shimodaira.
\newblock Improving predictive inference under covariate shift by weighting the
  log-likelihood function.
\newblock {\em Journal of Statistical Planning and Inference}, 90:227--244,
  2000.

\bibitem{Spiegelhalter_etal_2002}
D.~Spiegelhalter, N.~Best, B.~Carlin, and A.~van~der Linde.
\newblock Bayesian measures of model complexity and fit.
\newblock {\em Journal of the Royal Statistical Society: Series B},
  64:583--639, 2002.

\bibitem{Stein_1981}
C.~Stein.
\newblock Estimation of the mean of a multivariate normal distribution.
\newblock {\em The Annals of Statistics}, 9:1135--1151, 1981.

\bibitem{Sugiyama_Krauledat_Muller_2007}
M.~Sugiyama, M.~Krauledat, and K.-R. M\"{u}ller.
\newblock Covariate shift adaptation by importance weighted cross validation.
\newblock {\em Journal of Machine Learning Research}, 8:985--1005, 2007.

\bibitem{Takeuchi_1976}
K.~Takeuchi.
\newblock Distribution of informational statistics and a criterion of model
  fitting (in japanese).
\newblock {\em Suri-Kagaku (Mathematic Sciences)}, 153:12–18, 1976.

\bibitem{Vehtari_etal_2017}
A.~Vehtari, A.~Gelman, and J.~Gabry.
\newblock Practical {B}ayesian model evaluation using leave-one-out
  cross-validation and {WAIC}.
\newblock {\em Statistics and Computing}, 27:1413–--1432, 2017.

\bibitem{Watanabe_book}
S.~Watanabe.
\newblock {\em Algebraic Geometry and Statistical Learning Theory}.
\newblock Cambridge University Press, 2009.

\bibitem{Watanabe_2010_b}
S.~Watanabe.
\newblock Asymptotic equivalence of {B}ayes cross validation and widely
  applicable information criterion in singular learning theory.
\newblock {\em Journal of Machine Learning Research}, 11:3571--3594, 2010.

\bibitem{Watanabe_2010}
S.~Watanabe.
\newblock Equations of states in singular statistical estimation.
\newblock {\em Neural Networks}, 23:20–34, 2010.

\bibitem{Watanabe_book_mtbs}
S.~Watanabe.
\newblock {\em Mathematical Theory of Bayesian Statistics}.
\newblock Chapman \& Hall/CRC, 2018.

\bibitem{Yamazaki_etal_2007}
K.~Yamazaki, M.~Kawanabe, S.~Watanabe, M.~Sugiyama, and K.-R. M\"{u}ller.
\newblock Asymptotic {B}ayesian generalization error when training and test
  distributions are different.
\newblock {\em Proceedings of the 24th {I}nternational {C}onference on
  {M}achine {L}earning (ICML'07)}, pages 1079--1086, 2007.

\end{thebibliography}

 \section*{Appendix}

\appendixpageoff
\appendixtitleoff
\renewcommand{\appendixtocname}{Supplementary material}
\begin{appendices}

\section{Conditions for the theorem}
\label{appendix: regularity conditions}

In this appendix,
we state the conditions for the theorem and then provide discussions on them.
To handle singular statistical models with which the posterior distributions are not well approximated by a normal distribution,
we use several concepts (such as the standard forms and the relatively finite variances) used in \cite{Watanabe_book_mtbs}.

\textbf{Conditions}:
We here state conditions for the theorem.

We first prepare several notations for conditions. 
Let 
\begin{align*}
\Theta_{0}&:=\left\{
\theta_{0}\in\Theta\,:\,
\Ep_{Y}\left[\sum_{i=1}^{n}s_{i}(Y_{i},\theta)\right]\text{ is maximized at }\theta_{0}
\right\}\\
\Theta_{0}'&:=\left\{
\theta_{0}\in\Theta\,:\,
\Ep_{Y}\left[\sum_{i=1}^{n}w_{i}\log h_{i}(Y_{i}\mid \theta)\right]\text{ is maximized at }\theta_{0}
\right\}
\end{align*}
Fix arbitrary $\theta_{0}\in\Theta_{0}$ and,
for $i=1,\ldots,n$,
let
\begin{align*}
f_{s,i}(y,\theta)
=s_{i}(y,\theta_{0})-s_{i}(y,\theta)
\,\,\text{and}\,\,
f_{h,i}(y,\theta)=
w_{i}\{
\log h_{i}(y\mid \theta_{0})
-
\log h_{i}(y\mid \theta)
\}.
\end{align*}
For $i=1,\ldots,n$, let
\begin{align*}
S_{i}(\theta)=\Ep_{Y}[f_{s,i}(Y_{i},\theta)]
\,\,\text{and}\,\,
H_{i}(\theta)=\Ep_{Y}[f_{h,i}(Y_{i},\theta)],
\end{align*}
Let
\begin{align*}
\bar{S}(\theta)=\frac{1}{n}\sum_{i=1}^{n}S_{i}(\theta)
\,\,\text{and}\,\,
\bar{H}(\theta)=\frac{1}{n}\sum_{i=1}^{n}H_{i}(\theta),
\end{align*}
and let 
\begin{align*}
\bar{S}_{n}(\theta; Y)
=\frac{1}{n}\sum_{i=1}^{n}f_{s,i}(Y_{i},\theta)
\,\,\text{and}\,\,
\bar{H}_{n}(\theta; Y)
=\frac{1}{n}\sum_{i=1}^{n}f_{h,i}(Y_{i},\theta).
\end{align*}

We then make the following conditions.
\begin{enumerate}
    \item[ (C1) ] $\Theta_{0}=\Theta_{0}'$.
    \item[ (C2) ] There exist both a compact subset $\mathcal{M}$ of an analytic manifold and a proper analytic function $g$ from $\mathcal{M}$ to $\Theta$ such that in each local coordinate of $\mathcal{M}$ and for each $i=1,\ldots,n$, we have
    \begin{align*}
    S_{i}(g(u))&= A_{s,i}(u)u^{2k},
    \\
    \pi(g(u))|g'(u)|&=b(u)|u^{h}|,
    \\
    H_{i}(g(u))&=A_{h,i}(u)u^{2k},
    \end{align*}
    where $A_{s,i}(u)$, $b(u)>0$, and $A_{h,i}(u)>0$ are positive analytic functions,
    and 
    $d$-dimensional multi-indices $k=(k_{1},\ldots,k_{d})$ and $h=(h_{1},\ldots,h_{d})$ depend on local coordinates. Here $u^{2k}$ indicates $u^{2k}:=\prod_{i=1}^{d} u_{i}^{2k_{i}}$.
    \item[ (C3) ] For each $i=1,\ldots,d$, 
    $f_{s,i}(y,\theta)$ and $f_{h,i}(y,\theta)$ have relatively finite variances:
    there exists a positive constant $\kappa$ such that for an arbitrary $\theta\in \Theta$,
    \begin{align*}
    \Ep_{Y}[\{f_{s,i}(Y_{i},\theta)\}^{2}]
    \le \kappa S_{i}(\theta)
    \,\,\text{and}\,\,
    \Ep_{Y}[\{f_{h,i}(Y_{i},\theta)\}^{2}]
    \le \kappa H_{i}(\theta).
    \end{align*}
    \item[ (C4) ] For a sufficiently large $l$, we have
    \begin{align*}
&\sup_{n}\max_{i=1,\ldots,n}\left(
\Ep_{Y}
\sup_{\theta \in \Theta}|f_{s,i}(Y_{i},\theta)|^{l}
\right)^{1/l}<\infty \,\text{and}\\
&\sup_{n}\max_{i=1,\ldots,n}\left(
\Ep_{Y}
\sup_{\theta\in\Theta}|f_{h,i}(Y_{i},\theta)|^{l}
\right)^{1/l}<\infty
    \end{align*}
\item[ (C5) ] We have $0<\inf_{n}\min_{i=1,\ldots,n}w_{i}\le \sup_{n}\max_{i=1,\ldots,n}w_{i}<\infty$.
\end{enumerate}

We further make the condition related to weighted empirical processes.
By Conditions 2 and 3,
for each $i=1,\ldots,n$,
there exist functions $a_{s,i}(y,u)$
and $a_{h,i}(y,u)$ that are analytic with respect to $u$ and 
\begin{align}
f_{s,i}(y,g(u))=a_{s,i}(y,u)u^{k}
\,\,\text{and}\,\,
f_{h,i}(y,g(u))=a_{h,i}(y,u)u^{k}.
\label{eq: forms of f}
\end{align}
Let 
\[\bar{A}_{s}(u):=\frac{1}{n}\sum_{i=1}^{n}A_{s,i}(u)
\,\,\text{and}\,\,
\tilde{a}_{s,i}(y,u):=a_{s,i}(y,u)/\bar{A}_{s}(u).
\]
Let
\[
\bar{A}_{h}(u):=\frac{1}{n}\sum_{i=1}^{n}A_{h,i}(u)
\,\,\text{and}\,\,
\tilde{a}_{h,i}(y,u):=a_{h,i}(y,u)/
\bar{A}_{h}(u).
\]
Let
\[
\xi_{s,n}(u):=\frac{1}{\sqrt{n}}\sum_{i=1}^{n}
\{u^{k}-\tilde{a}_{s,i}(Y_{i},u)\}
\quad\text{and}\quad
\xi_{h,n}(u):=\frac{1}{\sqrt{n}}\sum_{i=1}^{n}
\{u^{k}-\tilde{a}_{h,i}(Y_{i},u)\}
\]
Using these, we make the following additional conditions.
\begin{enumerate}
\item[ (C6) ] The following uniform law of large numbers holds:
\begin{align*}
\Ep_{Y}\sup_{\theta,\theta'\in\Theta}
\left|\frac{1}{n}\sum_{i=1}^{n}f_{s,i}(Y_{i},\theta)f_{h,i}(Y_{i},\theta')-\frac{1}{n}\sum_{i=1}^{n}\Ep_{Y}[f_{s,i}(Y_{i},\theta)f_{h,i}(Y_{i},\theta')]\right|
=o(1)
\end{align*}
\item[ (C7) ] The following weak convergence holds:
\begin{align*}
\begin{pmatrix}
\xi_{s,n}(u)\\
\xi_{h,n}(v)
\end{pmatrix}
\Rightarrow
\begin{pmatrix}
\xi_{s}(u)\\
\xi_{h}(v)
\end{pmatrix}
\end{align*}
    in $\ell^{\infty}(\mathcal{M} \times \mathcal{M}\to\mathbb{R}^{2})$,
where 
$\ell^{\infty}(\mathcal{M}\times \mathcal{M}\to \mathbb{R}^{2})$ is the set of all uniformly bounded functions $z:\mathcal{M}\times \mathcal{M}\to \mathbb{R}^{2}$,
and
$(\xi_{s},\xi_{h})^{\top}$ is the $\mathbb{R}^{2}$-valued Gaussian random field in $\ell^{\infty}(\mathcal{M}\times\mathcal{M}\to\mathbb{R}^{2}).$ 
\item[(C8)] For a sufficiently large $l$, we have
    \begin{align*}
&\sup_{n}\max_{i=1,\ldots,n}\left(
\Ep
\sup_{u \in \mathcal{M}}|a_{s,i}(Y_{i},u)|^{l}
\right)^{1/l}<\infty \,\text{and}\\
&\sup_{n}\max_{i=1,\ldots,n}\left(
\Ep
\sup_{u\in\mathcal{M}}|a_{h,i}(Y_{i},u)|^{l}
\right)^{1/l}<\infty
\end{align*}
\end{enumerate}

\textbf{Discussions on conditions}:
We here discuss conditions we use.
\begin{enumerate}
\item[(C1)]: Condition (C1) ensures that the optimal parameter sets for the training and the test phases are identical. In the weighted predictive inference with weights not depending on $\theta$, this condition is satisfied.
\item[(C2)]: 
By Hironaka's resolution of singularities, non-zero analytic functions $F(\theta)$ with $\{\theta:F(\theta)=0\}$ non-empty has the standard form, that is, 
there exist both a compact subset $\mathcal{M}_{F}$ of an analytic manifold and a proper analytic function $g_{F}$ from $\mathcal{M}_{F}$ to $\Theta$ such that in each local coordinate of $\mathcal{M}_{F}$, we have
\begin{align*}
F(g(u))=A_{F}(u)u^{2k_{F}}
\quad\text{and}\quad
|g'(u)|=B_{F}(u)|u^{2h_{F}}|,
\end{align*}
where $A_{F},B_{F}>0$, and $k_{F}$ and $h_{F}$ are $d$-dimensional multi-indices.
Condition (C2) ensures that the standard forms after Hironaka's resolution of singularities for $\Ep_{Y} \log h_{i}(Y_{i}\mid \theta)$ and $\Ep_{Y}[s_{i}(Y_{i},\theta)]$ have the same multi-index. This condition seems somewhat restrictive. Yet, the weighted inference with both singular and regular statistical models
satisfies this condition. So, this is a mild condition in the weighted inference.
\item[(C3)]: Condition (C3) ensures the relatively finite variances of all quantities $f_{s,i}$ and $f_{h,i}$. This ensures the multi-indices of $S_{i}$, $H_{i}$, $f_{s,i}$, and $f_{h,i}$ are the same; See Definition 14 in \cite{Watanabe_book_mtbs}.
\item[(C4)-(C5)]: Conditions (C4) and (C5) are mathematical conditions to control residuals in the proof. 
\item[(C6)-(C7)]: 
Conditions (C6) and (C7) ensures the well behaviours of empirical processes $\xi_{s,n}$ and $\xi_{h,n}$.
For the weighted inference,
these conditions are proved by using the Lindeberg--Feller-type central limit theorems. 
\item[(C8)]: Condition (C8) is a mathematical condition to control residuals in the proof.
\end{enumerate}

\section{Proof of Theorem \ref{thm: consistency of PCIC in regular models}}
\label{proofoftheorem}

In this appendix, we provide a sketch of the proof of the main theorem.
Here, we will derive the difference between 
the the weighted quasi-Bayesian generalisation error $G_{n}$ and
the weighted quasi-Bayesian training error
\begin{align*}
T_{n}:=
-\sum_{i=1}^{n}
    \frac{w_i}{n}\log \Eppos[h_{i}(X_{i}\,\mid\, \theta)].
\end{align*}
The proof consists of three steps.
First, we derive asymptotic forms of quasi posterior expectations by using the inverse Mellin transform.
We next expand weighted quasi-Bayesian generalisation and training errors by using the functional cumulant expansion.
Finally, we evaluate the difference by the Stein(--Malliavin) identity in the presence of correlation.

\textbf{The first step: the asymptotic quasi-posterior expectation}.
First, 
we shall derive an asymptotic form of the quasi-posterior distribution
by using the inverse Mellin transform (Chapter 4 of \cite{Watanabe_book}).
Let
\begin{align*}
\tilde{\Omega}(\theta):=\pi(\theta)\exp\left\{\sum_{i=1}^{n}
(s_{i}(Y_{i},\theta)-s_{i}(Y_{i},\theta_{0}))\right\}
\end{align*}
and we then express the quasi-posterior distribution as 
\begin{align*}
\pi(\theta\,;\,Y)=\frac{\tilde{\Omega}(\theta)}{\int \tilde{\Omega}(\theta') d\theta'}.
\end{align*}

From equation (\ref{eq: forms of f}),
we have, in each local coordinate of $\mathcal{M}$,
\begin{align*}
\tilde{\Omega}(\theta)d\theta
&=\tilde{\Omega}(g(u))|g'(u)|du\\
&=\exp\left\{-n u^{k}\frac{1}{n}\sum_{i=1}^{n} a_{s,i}(Y_{i},u)\right\}b(u)|u^{h}|du
\\
&=\exp\left\{
-nu^{k}\bar{A}_{s}(u)\left(
u^{k}
-\frac{1}{\sqrt{n}}\xi_{s}(u)
\right)
\right\}
b(u)|u^{h}|du
.
\end{align*}
We define the real log canonical threshold $\lambda$ and its multiplicity $m$ as follows:
\begin{align*}
\lambda&:=\min_{\text{all local coordinates}}
\min_{1\le j \le d}\frac{h_{j}+1}{2k_{j}}
\,\,\text{and}\\
m&=\max_{\text{all local coordinates}}
\#\left\{j:\, \frac{h_{j}+1}{2k_{j}}=\lambda\right\}.
\end{align*}
Then we rearrange the order of the indices so that for $u=(u_{a},u_{b})$,
$u_{a}$ is the part of $u$ of which the index satisfies
$\{h_{j}+1\}/\{2k_{j}\}=\lambda,\,j\in a$.

Letting $\delta(\cdot)$ be the Dirac delta function,
Theorem 9 of \cite{Watanabe_book_mtbs}
implies that, for $u$ in each local coordinate of $\mathcal{M}$,
\begin{align*}
\delta(t-u^{2k})|u^{h}|b(u)du=
t^{\lambda-1}(-\log t)^{m-1}du^{*} + o(t^{\lambda-1}(-\log t)^{m-1})
\end{align*}
where  $du^{*}$ is the integration
\begin{align*}
du^{*}=\frac{b(u)}{2^{m}(m-1)!\prod_{j=1}^{m}k_{j}}\delta(u_{a})(u_{b})^{\mu}du
\end{align*}
with $\mu:=\{\mu_{j}=-2\lambda k_{j}+h_{j}: \, m+1\le j \le d\}$.
Together with the partition of unity $\{\sigma_{\alpha}(\cdot):\alpha \text{ is an index of local coordinate}\}$
and letting $U_{\alpha}$ be the $\alpha$-th local coordinate of $\mathcal{M}$,
this gives, for an integrable function $F(\theta)$,
\begin{align*}
&\int F(\theta)\tilde{\Omega}(\theta)d\theta\\
&=\sum_{\alpha:\text{index of local coordinate}}
\int_{U_{\alpha}}\sigma_{\alpha}(u) F(g(u)) \tilde{\Omega}(g(u))|g'(u)|du
\\
&=\sum_{\alpha}
\int_{U_{\alpha}}\int_{0}^{\infty} 
\,F(t,u)\exp(
-\bar{A}_{s}(u)t+\bar{A}_{s}(u)\sqrt{t}\xi_{s,n}(u)
)
\,\delta(t-n u^{2k})
b_{\alpha}(u)|u^{h}|dtdu
\\
&=
\frac{(\log n)^{m-1}}{n^{\lambda}}
\sum_{\alpha}\int_{U_{\alpha}}
\int_{0}^{\infty}\, 
F(t,u)
t^{\lambda-1}\exp(
-\bar{A}_{s}(u)t+\bar{A}_{s}(u)\sqrt{t}\xi_{s,n}(u)
)dtdu^{**}
\\
&\quad\quad +
o_{P}\left(\frac{(\log n)^{m-1}}{n^{\lambda}}\right),
\end{align*}
where 
$F(t,u)$ is $F(\theta)$ regarded as the function with respect to $(t,u)$,
$b_{\alpha}(u):=\sigma_{\alpha}(u)b(u)$,
and $du^{**}=\sigma_{\alpha}(u)du^{*}$.
Thus, the quasi-posterior expectation of an integrable function $F(\theta)$ is expanded as 
\begin{align}
\Eppos[F(\theta)]= \langle  F(t,u) \rangle (1+o_{P}(1)) ,
\label{eq: asymptotic posterior expectation}
\end{align}
where
\begin{align*}
\langle
F(t,u)
\rangle
=
\frac{\sum_{\alpha} \int_{U_{\alpha}}\int_{0}^{\infty} \,F(t,u)t^{\lambda-1}\exp\left(
-\bar{A}_{s}(u)t+\bar{A}_{s}(u)\sqrt{t}\xi_{s,n}(u)
\right) dtdu^{**}}{
\sum_{\alpha} \int_{U_{\alpha}}\int_{0}^{\infty}\, t^{\lambda-1}\exp\left(
-\bar{A}_{s}(u)t+\bar{A}_{s}(u)\sqrt{t}\xi_{s,n}(u)
\right) dtdu^{**}}.
\end{align*}

\textbf{The second step: the functional cumulant expansion}.
Let us introduce the functional cumulant generating functions
\begin{align*}
\mathcal{G}(\beta)&=\Ep_{\tilde{Y}}\left[\frac{1}{n}\sum_{i=1}^{n}w_{i}\log \Eppos[h_{i}(\tilde{Y_{i}}\mid \theta)^{\beta}]\right]\quad\text{and}
\\
\mathcal{T}(\beta)
&=\frac{1}{n}\sum_{i=1}^{n} w_{i}\log \Eppos
\left[h_{i}(Y_{i}\mid\theta)^{\beta}\right].
\end{align*}
Then, the Taylor expansion gives us the following relations:
\begin{align*}
G_{n}&=-\mathcal{G}'(0)-\frac{1}{2}\mathcal{G}''(0)+O_{P}\left(\sup_{|\beta|\le 1} \left|\mathcal{G}'''(\beta)\right|\right)
\,\,\text{and}
\\
T_{n}&=-\mathcal{T}'(0)-\frac{1}{2}\mathcal{T}''(0)+O_{P}\left(\sup_{|\beta|\le 1}\left|\mathcal{T}'''(\beta)\right|\right).
\end{align*}
Consider the remaining terms. 
Let $\underline{w}:=\inf_{n}\min_{i}w_{i}$ and let $\overline{w}:=\sup_{n}\max_{i}w_{i}$.
From Lemma 8 of \cite{Watanabe_book_mtbs}, we get
\begin{align*}
\sup_{|\beta|\le 1} \left|\mathcal{G}'''(\beta)\right|
&\le
\frac{6}{\underline{w}^{2}}\frac{1}{n}\sum_{i=1}^{n}
\Ep_{\tilde{Y}}\left[
\frac{\Eppos[|f_{h,i}(\tilde{Y}_{i},\theta)|^{3}\exp(-\beta f_{h,i}(\tilde{Y}_{i},\theta)) / \overline{w}]}
{\Eppos[\exp(-\beta f_{h,i}(\tilde{Y}_{i},\theta) / \underline{w})]}
\right].
\end{align*}
From equation (\ref{eq: forms of f}),
we further get
\begin{align*}
\sup_{|\beta|\le 1} \left|\mathcal{G}'''(\beta)\right|
&\le
\frac{6}{\underline{w}^{2}}\frac{1}{n}\sum_{i=1}^{n}
\Ep_{\tilde{Y}}\left[
\sup_{u\in\mathcal{M}}|a_{h,i}(\tilde{Y}_{i},u)|^{3}
\frac{\Eppos[|u^{k}|^{3}\exp(-\beta f_{h,i}(\tilde{Y}_{i},\theta)) / w_{i}]}
{\Eppos[\exp(-\beta f_{h,i}(\tilde{Y}_{i},\theta) / w_{i})]}
\right],
\end{align*}
and here,
by considering Conditions (C4) and (C8),
and
by applying Lemmas 17-19 of \cite{Watanabe_book_mtbs} 
we obtain
\begin{align*}
\frac{1}{n}\sum_{i=1}^{n}
\Ep_{\tilde{Y}}\left[
\sup_{u\in\mathcal{M}}|a_{h,i}(\tilde{Y}_{i},u)|^{3}
\frac{\Eppos[|u^{k}|^{3}\exp(-\beta f_{h,i}(\tilde{Y}_{i},\theta)) / w_{i}]}
{\Eppos[\exp(-\beta f_{h,i}(\tilde{Y}_{i},\theta) / w_{i})]}
\right]
=O_{P}(n^{-3/2}),
\end{align*}
which implies 
$\sup_{|\beta|\le 1} \left|\mathcal{G}'''(\beta)\right|
=O_{P}(n^{-3/2})$ and $\sup_{|\beta|\le 1} \left|\mathcal{T}'''(\beta)\right|
=O_{P}(n^{-3/2})$.

\textbf{The third step: evaluating $\mathcal{G}'(0)$, $\mathcal{G}''(0)$, $\mathcal{T}'(0)$, and $\mathcal{T}''(0)$}. 
By definition, we have
\begin{align*}
-\mathcal{G}'(0)
&=
\Ep_{\tilde{Y}}\left[
-\frac{1}{n}\sum_{i=1}^{n}w_{i}\log h_{i}(\tilde{Y}_{i}\mid\theta_{0})\right]
+
\Ep_{\tilde{Y}}
\Eppos\left[
\frac{1}{n}\sum_{i=1}^{n}f_{h,i}(\tilde{Y}_{i}, \theta)\right],
\\
-\mathcal{T}'(0)
&=
-\frac{1}{n}\sum_{i=1}^{n}
w_{i} \log h_{i}(Y_{i}\mid\theta_{0})
+\Eppos\left[\frac{1}{n}\sum_{i=1}^{n}f_{h,i}(Y_{i},\theta)\right]
,
\\
-\mathcal{G}''(0)
&=-\Ep_{\tilde{Y}}
\left[
\frac{1}{n}\sum_{i=1}^{n}w_{i}
\Vpos[\log h_{i}(\tilde{Y}_{i}\mid\theta)]
\right],\,\,\text{and}
\\
-\mathcal{T}''(0)
&=
-\frac{1}{n}\sum_{i=1}^{n}w_{i}
\Vpos[\log h_{i}(Y_{i}\mid \theta)].
\end{align*}
It is easy to see that 
$\Ep_{Y}[\mathcal{G}''(0)-\mathcal{T}''(0)]=o(n^{-1})$ and so we shall analyse 
the difference $\Ep_{Y}[-\mathcal{T}'(0)+\mathcal{G}'(0)]$.
Equation (\ref{eq: forms of f}) gives
\begin{align*}
&\Eppos\left[\frac{1}{n}\sum_{i=1}^{n}f_{h,i}(Y_{i},\theta)\right]\\
&=
\Eppos\left[ \frac{1}{n} \sum_{i=1}^{n} a_{h,i}(Y_{i},u) 
u^{k}\right]
\\
&=
\Eppos\left[ \bar{A}_{h}(u)u^{2k} \right]
-
\frac{1}{n}
\left\langle 
\bar{A}_{h}(u) \sqrt{t} \xi_{h,n}(u)\right\rangle 
+o_{P}\left(
n^{-1}
\right)\\
&=\Ep_{\tilde{Y}}
\Eppos\left[
\frac{1}{n}\sum_{i=1}^{n}f_{h,i}(\tilde{Y}_{i}, \theta)\right]
-\frac{1}{n}
\left\langle 
\bar{A}_{h}(u) \sqrt{t} \xi_{h,n}(u)\right\rangle 
+o_{P}\left(
n^{-1}
\right).
\end{align*}
For $\lambda>0$, and for an integrable function $F(t,u)$, let 
\begin{align*}
\mathcal{S}_{\lambda}[F(t,u)]
:=  \sum_{\alpha}\int_{U_{\alpha}} \int_{0}^{\infty} t^{\lambda-1}\exp(-\bar{A}_{s}(u) t) F(t,u) dt du^{**}.
\end{align*}
Together with Condition (C7), this definition gives
\begin{align*}
\Ep_{Y}\left[\left\langle 
\bar{A}_{h}(u) \sqrt{t} \xi_{h,n}(u)\right\rangle \right]
=\Ep_{\xi_{h},\xi_{s}}\left[\frac{\mathcal{S}_{\lambda}[\bar{A}_{h}(u)\sqrt{t}\xi_{h}\exp(\bar{A}_{s}(u)\sqrt{t}\xi_{s})]}
{\mathcal{S}_{\lambda}[\exp(\bar{A}_{s}(u')\sqrt{t'}\xi_{s}(u'))]}\right]
+\Ep_{Y}[o_{P}(1)],
\end{align*}
where $\Ep_{\xi_{h},\xi_{s}}$ is the expectation with respect to $\xi_{h}$ and $\xi_{s}$.
To analyse the above,
we employ the Stein(--Malliavin) identity in the presence of correlation:
Using the orthogonal normal basis $\{\phi_{k}\}$ of $L^{2}(\mathcal{M})$,
we conduct the Karhunen-Lo\`{e}ve expansions of $\xi_{s}$ and $\xi_{h}$:
\begin{align*}
\xi_{s}(u)
=\sum_{k=1}^{\infty}\phi_{k}(u)\sqrt{\kappa^{s}_{k}}\xi_{s,k}
\quad\text{and}\quad
\xi_{h}(u)=
\sum_{k=1}^{\infty}\phi_{k}(u)\sqrt{\kappa^{h}_{k}}\xi_{h,k},
\end{align*}
where 
\begin{align*}
\kappa^{s}_{k}:= \mathbb{V}_{\xi_{s}}\left[\int \phi_{k}(u)\xi_{s}(u)du\right]
\quad\text{and}\quad
\kappa^{h}_{k}:= \mathbb{V}_{\xi_{h}}\left[\int \phi_{k}(u)\xi_{h}(u)du\right]
\end{align*}
with $\mathbb{V}_{\xi_{h}}$
and $\mathbb{V}_{\xi_{s}}$
denoting the variances with respect to $\xi_{h}$ and $\xi_{s}$, respectively.
Here $\{\xi_{s,k}\}$
are independent Gaussian with the variance 1,
and $\{\xi_{h,k}\}$
are independent Gaussian with the variance 1.
But,
$\xi_{s,k}$
and $\xi_{h,k'}$ are correlated, and we denote their correlations by $\kappa_{kk'}$:
$\kappa_{kk'}=\Ep_{\xi_{s},\xi_{h}}[\xi_{s,k}\xi_{h,k}]$.
Then, 
by letting $\phi^{h,*}_{k}(u):=\sqrt{\kappa_{k}^{h}}\phi_{k}(u)$ and by the Karhunen-Lo\'{e}ve expansion of $\xi_{h}$,
we have
\begin{align*}
&\Ep_{\xi_{h},\xi_{s}}
\left[
\frac{\mathcal{S}_{\lambda}[\bar{A}_{h}(u)\sqrt{t}\xi_{h}(u)\exp(\bar{A}_{s}(u)\sqrt{t}\xi_{s}(u) )]}
{\mathcal{S}_{\lambda}[\exp(\bar{A}_{s}(u')\sqrt{t'}\xi_{s}(u'))]}
\right]\\
&=
\sum_{k'=1}^{\infty}
\mathcal{S}_{\lambda}\left[
\Ep_{\xi_{h},\xi_{s}}\left[
\xi_{h,k'}
\frac{
\bar{A}_{h}(u)\sqrt{t}\phi_{k'}^{*}(u)\exp(\bar{A}_{s}(u)\sqrt{t}\xi_{s}(u) )]}
{\mathcal{S}_{\lambda}[\exp(\bar{A}_{s}(u')\sqrt{t'}\xi_{s}(u'))]}
\right]
\right].
\end{align*}
Here we use the Stein identity in the presence of correlation:
Suppose that $(N_{1},N_{2})$ follows a bivariate Gaussian distribution and $F$ is a differentiable function with $\Ep[|F'(N_{1})|]$. Then, 
$\Cov[F(N_{1}),N_{2}]=\Cov[N_{1},N_{2}]\Ep[F'(N_{1})]$.
Together with 
letting $\kappa_{kk'}$ be the correlation between $\xi_{s,k}$ and $\xi_{h,k'}$
and letting
$\phi^{s,*}_{k}(u):=\sqrt{\kappa_{k}^{s}}\phi_{k}(u)$,
this identity gives
\begin{align*}
&\Ep_{\xi_{h},\xi_{s}}\left[
\xi_{h,k'}
\frac{
\bar{A}_{h}(u)\sqrt{t}\phi_{k'}^{h,*}(u)\exp(\bar{A}_{s}(u)\sqrt{t}\xi_{s}(u) )]}
{\mathcal{S}_{\lambda}[\exp(\bar{A}_{s}(u')\sqrt{t'}\xi_{s}(u'))]}
\right]
\\
&=
\Ep_{\xi_{h},\xi_{s}}
\left[
\sum_{k=1}^{\infty}\kappa_{kk'}
\frac{\partial}{\partial \xi_{s,k}}
\frac{
\bar{A}_{h}(u)\sqrt{t}\phi_{k'}^{h,*}(u)\exp(\bar{A}_{s}(u)\sqrt{t}\xi_{s}(u) )]}
{\mathcal{S}_{\lambda}[\exp(\bar{A}_{s}(u')\sqrt{t'}\xi_{s}(u'))]}
\right]
\\
&=
\Ep_{\xi_{h},\xi_{s}}\Bigg[
\sum_{k=1}^{\infty}
\kappa_{kk'}
\Bigg\{
\frac{
\bar{A}_{h}
(u)
\bar{A}_{s}(u)
t\phi_{k'}^{h,*}(u)
\phi_{k}^{s,*}(u)
\exp(\bar{A}_{s}(u)\sqrt{t}\xi_{s}(u) )
}{\mathcal{S}_{\lambda}[\exp(\bar{A}_{s}(u')\sqrt{t'}\xi_{s}(u'))]}
\\
&\quad\quad\quad\quad
-
\frac{
\bar{A}_{h}(u)\sqrt{t}\phi_{k'}^{h,*}(u)
\mathcal{S}_{\lambda}[\bar{A}_{s}(u')\sqrt{t'}\phi^{s,*}_{k'}(u')\exp(\bar{A}_{s}(u')\sqrt{t'}\xi_{s}(u'))]
}
{(\mathcal{S}_{\lambda}[\exp(\bar{A}_{s}(u')\sqrt{t'}\xi_{s}(u'))])^{2}}
\Bigg\}
\Bigg].
\end{align*}
Thus,
from (\ref{eq: asymptotic posterior expectation}) and Condition (C7),
we have
\begin{align*}
&\Ep_{\xi_{h},\xi_{s}}
\left[
\frac{\mathcal{S}_{\lambda}[\bar{A}_{h}(u)\sqrt{t}\xi_{h}(u)\exp(\bar{A}_{s}(u)\sqrt{t}\xi_{s}(u) )]}
{\mathcal{S}_{\lambda}[\exp(\bar{A}_{s}(u')\sqrt{t'}\xi_{s}(u'))]}
\right]\\
&=
\Ep_{\xi_{s}}
\sum_{k,k'=1}^{\infty}
\kappa_{kk'}
\langle\bar{A}_{h}(u)\sqrt{t}\phi^{h,*}_{k'}(u),\bar{A}_{s}(u)\sqrt{t}\phi^{s,*}_{k}(u)\rangle_{\infty} ,
\end{align*}
where $\langle\cdot, \cdot\rangle_{\infty}$ is the asymptotic quasi-posterior covariance defined as
\begin{align*}
\langle F(t,u),G(t,u)\rangle_{\infty}
:=
\langle F(t,u)G(t,u)\rangle_{\infty}
-\langle F(t,u)\rangle_{\infty}
\langle G(t,u) \rangle_{\infty}
\end{align*}
with $\langle \cdot \rangle_{\infty}$ defined by replacing $\xi_{s,n}$ with $\xi_{s}$ in $\langle \cdot \rangle$. 
Consider the first term on the right hand side of the above.
From the definition of the Karhunen-Lo\`{e}ve expansion,
we have
\begin{align*}
\sum_{k,k'=1}^{\infty}
\lambda_{kk'}
\phi^{h,*}_{k'}(u)
\phi^{s,*}_{k}(u)
=
\Ep_{\xi_{h},\xi_{s}}
[\xi_{h}(u)\xi_{s}(u)].
\end{align*}
Since $u^{k}=0$ on the support of $du^{**}$,
we have, for $u$ contained in the support of $du^{**}$, 
\begin{align*}
\sum_{k,k'=1}^{\infty}
\kappa_{kk'}
\phi^{h,*}_{k'}(u)
\phi^{s,*}_{k}(u)
&=\frac{1}{n}\Ep_{\tilde{Y}}\left[\sum_{i=1}^{n}\tilde{a}_{h,i}(\tilde{Y}_{i},u)
\sum_{i=1}^{n}\tilde{a}_{s,i}(\tilde{Y}_{i},u)
\right]
\\
&=
\frac{1}{n}\Ep_{\tilde{Y}}\left[\sum_{i=1}^{n}\tilde{a}_{h,i}(\tilde{Y}_{i},u)
\tilde{a}_{s,i}(\tilde{Y}_{i},u)
\right]
\\
&=
\frac{1}{n}
\left(\sum_{i=1}^{n}\tilde
{a}_{h,i}(Y_{i},u)
\tilde{a}_{s,i}(Y_{i},u)
\right)
+o_{P}(1),
\end{align*}
where the last equation follows from Condition (C6).
Further, we have
\begin{align*}
&\sum_{k,k'=1}^{\infty}
\kappa_{kk'}
\langle \bar{A}_{h}(u)\sqrt{t}\phi^{h,*}_{k'}(u)\rangle_{\infty}
\langle \bar{A}_{s}(u)\sqrt{t}\phi^{s,*}_{k}(u)\rangle_{\infty}
\\
&=
\Ep_{\tilde{\xi}_{h},\tilde{\xi}_{s}}
\left[\sum_{k'=1}^{\infty}
\tilde{\xi}_{h,k'}\langle \bar{A}_{h}(u)\sqrt{t}\phi_{k'}^{h,*}(u)
\rangle_{\infty}
\sum_{k=1}^{\infty}
\tilde{\xi}_{s,k}\langle
\bar{A}_{s}(u)\sqrt{t}
\phi_{k}^{s,*}(u)
\rangle_{\infty}\right]
\\
&=
\frac{1}{n}
\Ep_{\tilde{Y}}
\left[
\left\langle \sum_{i=1}^{n}a_{h,i}(\tilde{Y}_{i},u)u^{k}
\right\rangle
\left\langle \sum_{i=1}^{n}a_{s,i}(\tilde{Y}_{i},u)u^{k}
\right\rangle
\right]
+o(1)
\\
&=
\Ep_{\tilde{Y}}\left[
\frac{1}{n}
\sum_{i=1}^{n}
\langle
\langle a_{h,i}(\tilde{Y}_{i},u) u^{k}
 a_{s,i}(\tilde{Y}_{i},v) v^{k}
\rangle
\rangle
\right]
\\
&=\frac{1}{n}
\sum_{i=1}^{n}
\langle\langle a_{h,i}(Y_{i},u) u^{k} a_{s,i}(Y_{i},v) v^{k}
\rangle\rangle
+o_{P}(1),
\end{align*}
where
$\langle\langle F(u,v)\rangle\rangle:=
\langle\langle F(u,v)\rangle_{\text{with respect to $u$}}\rangle_{\text{with respect to $v$}}$
for a function $F(u,v)$,
and the last equation follows from Condition (C6).
Thus, we obtain
\begin{align*}
\Ep_{Y}\left[\left\langle 
\bar{A}_{h}(u) \sqrt{t} \xi_{h,n}(u)\right\rangle \right]
&=\frac{1}{n}
\sum_{i=1}^{n}
\Covpos[ a_{h,i}(Y_{i},u)u^{k}, a_{s,i}(Y_{i},u)u^{k}]
+o_{P}(1)\\
&=\frac{1}{n}\sum_{i=1}^{n}\Covpos[f_{h,i}(Y_{i},\theta),f_{s,i}(Y_{i},\theta)] + o_{P}(1)\\
&=\frac{1}{n}
\sum_{i=1}^{n}
w_{i}\Covpos[\log h_{i}(Y_{i}\mid \theta), s_{i}(Y_{i},\theta)]
+ o_{P}(1),
\end{align*}
which completes the proof.
\qed
\end{appendices}

\end{document}